\numberwithin{equation}{section}
\theoremstyle{plain}
\newtheorem{theorem}{Theorem}
\newtheorem{lemma}{Lemma}
\theoremstyle{definition}
\newtheorem{definition}{Definition}
\newtheorem*{assi*}{(I) Short-range interaction}
\newtheorem*{asspone*}{(P1) Log-H\"older continuity condition}
\newtheorem*{assptwo*}{(P2) Rosenblatt strongly mixing condition}
\newtheorem*{dskn*}{$\dskn$}
\newtheorem*{dsknn*}{$\dsknn$}
\theoremstyle{remark}
\newcommand{\prob}[1]{\DP\left\{#1\right\}}
\newcommand{\esm}[1]{\mathbb{E}\left[\,#1\,\right]}
\newcommand{\Bone}{\mathbf{1}}
\newcommand{\BC}{\mathbf{C}}
\newcommand{\BG}{\mathbf{G}}
\newcommand{\BH}{\mathbf{H}}
\newcommand{\BK}{\mathbf{K}}
\newcommand{\BP}{\mathbf{P}}
\newcommand{\BU}{\mathbf{U}}
\newcommand{\BV}{\mathbf{V}}
\newcommand{\BX}{\mathbf{X}}
\newcommand{\CE}{\mathcal{E}}
\newcommand{\DN}{\mathbb{N}}
\newcommand{\DP}{\mathbb{P}}
\newcommand{\DR}{\mathbb{R}}
\newcommand{\DZ}{\mathbb{Z}}
\newcommand{\BDelta}{\mathbf{\Delta}}
\newcommand{\BPsi}{\mathbf{\Psi}}
\newcommand{\Bx}{\mathbf{x}}
\newcommand{\By}{\mathbf{y}}
\newcommand{\Bu}{\mathbf{u}}
\newcommand{\FB}{\mathfrak{B}}
\DeclareMathOperator{\dist}{dist}
\DeclareMathOperator{\supp}{supp}
\newcommand{\ee}{\mathrm{e}}
\newcommand{\ess}{\mathrm{ess}}
\newcommand{\condI}{\mathbf{(I)}}
\newcommand{\condPone}{\mathbf{(P1)}}
\newcommand{\condPtwo}{\mathbf{(P2)}}
\newcommand{\dskn}{\mathbf{(DS.}k,N\mathbf{)}}
\newcommand{\dsknn}{\mathbf{(DS.}k,n,N\mathbf{)}}
\begin{document}
\title[localization for continuous models with correlated potentials]{Localization for $N$-particle continuous models with strongly mixing correlated random potentials}

\author[T.~Ekanga]{Tr\'esor EKANGA$^{\ast}$}

\address{$^{\ast}$%
Institut de Math\'ematiques de Jussieu, 
Universit\'e Paris Diderot,
Batiment Sophie Germain, 
13 rue Albert Einstein,
75013 Paris,
France}
\email{tresor.ekanga@imj-prg.fr}
\subjclass[2010]{Primary 47B80, 47A75. Secondary 35P10} 
\keywords{multi-particle, low energy, random operators, correlated potentials, Anderson localization, continuum}
\date{\today}
\begin{abstract}
For the multi-particle Anderson model with correlated random potential in the continuum, we show under fairly general assumptions on the inter-particle interaction and the random external potential, the Anderson localization which consists of both the spectral, exponential localization and the strong dynamical localization. The localization results are proven near the lower spectral edge of the almost sure spectrum and the proofs require the uniform log-H\"older continuity assumption of the probability distribution functions of the random field in addition of the Rosenblatt's strongly mixing condition.
\end{abstract}
\maketitle

\section{Introduction, assumptions and the main result}

\subsection{Introduction}
We analyze multi-particle random Schr\"odinger operators in the continuous space of configurations. The work follows the paper \cite{E17} where the analysis was  done on the lattice. The main problem of the paper is that, we allow the values of the external random field to be correlated but strongly mixing. This results in a substantial modification of the scaling analysis in order to prove the localization results. 

Localization for correlated potentials was obtained by von Dreifus and Klein \cite{DK91} for single-particle models with Gaussian and completely analytic Gibbs fields. Later, Chulaevsky himself \cite{CS08} on the one hand and Boutet de Monvel \cites{BCSS10,BCS11} on the other hand  proved the Wegner estimates for correlated potentials and obtained in the sequel the Anderson localization. Also, Klopp \cite{Kl12}, analyzed the spectral statistic of the Andeson model in the continuum with weakly correlated random potentials. 

Let us recall that we assume two important assumptions on the correlated random variables in the Anderson model. First, the uniform log-H\"older continuity condition of the probability distribution functions of the random field.  This assumption is the one that makes the Wegner estimates of \cite{BCSS10} in a form suitable for the multi-scale analysis. Second, the Rosenblatt's strongly mixing condition which plays an important role in both a large deviation bound of the random stochastic process and the scale induction step of the multi-scale analysis.

As was said above, we will use the multi-scale analysis technique to prove our localization results, following the scheme developed for multi-particle models by Chulaevsky and Suhov in \cite{CS09} in the high disorder limit and adapted and improved in \cites{E11,E13} under the low energy regime. We recall that in the work \cite{E17} as well as in \cites{KN13,KN14},  the same results were proved in the i.i.d. case, so in this paper, all the proofs using independence must be revisited and re-written. 

Below, we describe the model and the assumptions. Our main result is Theorem \ref{thm:main.result} stated in Section 1.3. In Section 2, we prove a large deviation  bound for our multi-particle model with correlated random potential. Section 3 is devoted to the multi-particle multi-scale analysis at low energy. Finally, in section 4, we prove the localization results. Some parts of the rest of the text overlap with \cite{E17}.

\subsection{The model and the assumptions}

We fix the number of particles $N\geq 2$. We are concern with multi-particle random Schr\"odinger operators of the following form:
\[
\BH^{(N)}(\omega):=-\BDelta + \BU+\BV, 
\]
acting in $L^{2}((\DR^{d})^N)$. Sometimes, we will use the identification $(\DR^{d})^N\cong \DR^{Nd}$. Above, $\BDelta$ is the Laplacian on $\DR^{Nd}$, $\BU$ represents the inter-particle interaction which acts as multiplication operator in $L^{2}(\DR^{Nd})$. Additional information on $\BU$ is given in the assumptions. $\BV$ is the multi-particle random external potential also acting as multiplication operator on $L^{2}(\DR^{Nd})$. For $\Bx=(x_1,\ldots,x_N)\in(\DR^{d})^N$, $\BV(\Bx)=V(x_1)+\cdots+ V(x_N)$ and $\{V(x,\omega), x\in\DR^d\}$ is a random stochastic process relative to some probability space $(\Omega,\FB,\DP)$.

Observe that the non-interacting Hamiltonian $\BH^{(N)}_0(\omega)$ can be written as a tensor product:
\[
\BH^{(N)}_0(\omega):=-\BDelta +\BV=\sum_{k=1}^N \Bone^{\otimes(k-1)}_{L^{2}(\DR^d)}\otimes H^{(1)}(\omega)\otimes \Bone^{\otimes(N-k)}_{L^2(\DR^d)},
\]
where, $H^{(1)}(\omega)=-\Delta + V(x,\omega)$ acting on $L^2(\DR^d)$. We will also consider random Hamiltonian $\BH^{(n)}(\omega)$, $n=1,\ldots,N$ defined similarly. Denote by $|\cdot|$ the max-norm in $\DR^{nd}$.

\begin{assi*} 

Fix any $n=1,\ldots,N$. The potential of inter-particle interaction $\mathbf{U}$ is bounded and of the form
\[
\BU(\Bx)=\sum_{1\leq i<j\leq n}\Phi(|x_i-x_j|),\quad \Bx=(x_1,\ldots,x_n),
\]
where  $\Phi:\DN:\rightarrow\DR$ is a function such that

\begin{equation}\label{eq:finite.range.k}
\exists r_0\in\DN: \supp \Phi\subset[0,r_0].
\end{equation}
\end{assi*}

The random field $\{V(x,\omega); x\in\DZ^d\}$ is measurable with respect to some probability space $(\Omega,\FB,\DP)$. We define 
\[
F_{V,x}(t):=\prob{V(x,\omega)\leq t}\qquad \text{ and } F_{V,x}(t\big|\FB_{\neq x}):=\prob{V(x,\omega)\leq t\big| \FB_{\neq x}},
\]
the conditional probability distribution functions of $V$ where $\FB_{\neq x}$ represents the sigma-algebra generated by the random variables $\{V(y,\omega); y\neq x\}$

\begin{asspone*}
It is assumed that the conditional distribution functions $F_{V,x}$ are uniformly Log-H\"older continuous: for some $\kappa>0$ and any $\varepsilon >0$,
\[
\ess \sup_{x\in\DZ^d} \sup_{t\in\DR} \left(F_{V,x}(t+\varepsilon\big|\FB_{\neq x})-F_{V,x}(t\big|\FB_{\neq x})\right)\leq Const\cdot |\ln(\varepsilon)|^{\kappa}.
\]
\end{asspone*}

\begin{assptwo*} 
Let $L>0$ and positive constants $C_1>0$, $C_2>0$. For any pair of subsets $\Lambda', \Lambda'' \subset \DZ^{d}$ with $\dist(\Lambda',\Lambda'')\geq L$ and any events $\CE'\in\FB_{\Lambda'}$, $\CE''\in\FB_{\Lambda''}$,
\[
\left|\prob{\CE'\cap\CE''}-\prob{\CE'}\prob{\CE''}\right|\leq \ee^{-C_1 L}.
\]
Further, for any integer $\ell\geq 2$, and random variables $X_1(\omega),\ldots X_{\ell}(\omega)$, we have that
\[
\left|\esm{X_1\cdots X_{\ell}} - \esm{X_1}\cdots \esm{X_{\ell}}\right| \leq \ee^{-C_2 L^d},
\]
with $C_2> 3^d$.
\end{assptwo*}
Assumption $\condPtwo$ was used by Chulaevsky in \cite{C16} in the framework of his so-called Direct scaling of the multi-scale analysis under the high disorder regime. Above, 
$\FB_{\Lambda'}$ and $\FB_{\Lambda''}$ are the sigma-algebra generated  by the random variables $\{V(x,\omega); x\in\Lambda'\}$ and $\{V(x,\omega); x\in\Lambda''\}$ respectively.

\subsection{The main result}

\begin{theorem}\label{thm:main.result}
Assume that the hypotheses $\condI$, $\condPone$ and $\condPtwo$ hold true. Then

\begin{enumerate}
\item[A)] 
The lower spectral edge $E_0^{(N)}$ of $\BH^{(N)}(\omega)$, is almost surely non-random and there exist $E^*> E_0^{(N)}$ such that the spectrum of $\BH^{(N)}(\omega)$ in $[E_0^{(N)}, E^*]$ is pure point and each eigenfunction corresponding to eigenvalues in $[E_0^{(N)},E^*]$ is exponentially decaying at infinity in the max-norm.\\

\item[B)]
There exist $E^*>E^{(N)}_0$  such that for any bounded domain $\BK\subset \DR^{Nd}$, we have 
\begin{equation}\label{eq:low.energy.dynamical.loc}
\esm{\sup_{t>0}\| \BX^{s}\ee^{-it\BH^{(N)}(\omega)}\BP_I(\BH^{(N)}(\omega))\Bone_{\BK}\|_{L^2(\DR^{Nd})}}<\infty,
\end{equation}
where $(|\BX|\BPsi)(\Bx):=|\Bx|\BPsi(\Bx)$, $\BP_{I}(\BH^{(N)}(\omega))$ is the spectral projection of $\BH^{(N)}(\omega)$ onto the interval $I:=[E^{(N)}_0,E^*]$, and the supremum is taken over bounded measurable functions $f$. 
\end{enumerate}
\end{theorem}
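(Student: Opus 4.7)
The plan is to prove both parts of Theorem \ref{thm:main.result} by a multi-particle multi-scale analysis (MSA) at the bottom of the spectrum, following the scheme of \cites{CS09,E11,E13,E17} but replacing the independence arguments of \cite{E17} by decoupling estimates extracted from the two hypotheses $\condPone$ and $\condPtwo$. The analysis is driven by a double induction: an outer induction on the number of particles $n=1,\ldots,N$, and inside each value of $n$ a scale induction $L_{k+1}=\lfloor L_k^{\alpha}\rfloor$ with some $\alpha\in(1,2)$, producing the probabilistic statements $\dsonn,\ldots,\dsknn,\ldots$ which quantify exponential decay of local Green functions at energies close to the almost sure lower spectral edge $E_0^{(N)}$.

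The first step is the initial-scale estimate $\dsonn$ for every $n\leq N$, using the low-energy regime. Almost-sure non-randomness of $E_0^{(N)}$ follows from ergodicity of the random field under lattice translations, which $\condPtwo$ provides since mixing implies ergodicity. A Combes--Thomas argument applied to the restriction of $\BH^{(n)}$ to a cube of side $L_0$ yields exponential decay of the finite-volume resolvent on the event that the bottom of the restricted spectrum lies above $E_0^{(N)}+\eta$. The complementary event is then controlled via the Wegner-type bound derivable from the uniform log-H\"older assumption $\condPone$ in the form of \cite{BCSS10}, and for $\eta$ small relative to the log-H\"older exponent $\kappa$ and $L_0$ large enough, $\dsonn$ holds at the required probability level.

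The induction step $\dsknn\Rightarrow\dskonn$ is the main obstacle, and where the correlated setting forces genuine modifications. A cube $\BC^{(n)}_{L_{k+1}}(\Bu)$ is called partially interactive if its particle projection splits into two clusters separated by more than the interaction range $r_0$, and fully interactive otherwise. In the partially interactive case the Hamiltonian decouples into subsystems of fewer particles, and the outer induction on $n$ supplies the necessary decay. The fully interactive case classically relies on independence of the potentials in two well-separated cubes; here I would substitute $\condPtwo$, whose first inequality decouples events at distance $\geq L_k$ up to a correction $\eul^{-C_1 L_k}$, small compared to the polynomial Wegner probability. The second inequality of $\condPtwo$ enters the large-deviation bound of Section 2, controlling the probability that a sum of potential values over many sites deviates from its mean. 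The delicate point is to verify that the mixing remainder, the log-H\"older Wegner probability, and the conditional Wegner estimate for pairs of fully interactive cubes can be combined so as to preserve, across scales, the geometric progression of bad-cube probabilities required by the induction.

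Once $\dskn$ is available uniformly in $k$ for energies in some interval $[E_0^{(N)},E^*]$ with $E^*>E_0^{(N)}$, part A) follows by the standard Simon--Wolff / generalized eigenfunction argument adapted to the multi-particle max-norm, yielding pure-point spectrum with exponentially decaying eigenfunctions on $[E_0^{(N)},E^*]$. Part B) follows by strengthening the MSA output into moment bounds on the eigenfunction correlators and invoking the Germinet--Klein type conversion, which upgrades probabilistic decay of Green functions into the dynamical estimate \eqref{eq:low.energy.dynamical.loc} after partitioning $\BK$ into unit cubes and summing over them.
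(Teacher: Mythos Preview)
Your overall strategy---double induction on the particle number and on scales, the PI/FI dichotomy, replacing independence of distant cubes by the first inequality of $\condPtwo$, and then converting the MSA output into spectral and dynamical localization via \cite{E17,BCS11}---is exactly the paper's scheme. In fact your discussion of the induction step is more explicit than the paper's, which simply asserts that ``the rest of the multi-particle multi-scale analysis can be done exactly in the same way as in \cite{E17}.''

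The one genuine discrepancy is in the initial-scale estimate. You propose to control the complementary event---the bottom of the finite-volume spectrum lying too close to $E_0^{(N)}$---via the Wegner bound coming from $\condPone$. This is not what the paper does, and it would not work here: with only log-H\"older regularity the Wegner bound decays like a power of $|\ln\varepsilon|$, and after multiplication by the volume factor $(2L_0)^{nd}$ it cannot produce the polynomial decay $L_0^{-2p4^{N-n}}$ that the induction requires. The paper treats the initial step instead as a Lifshitz-tail estimate driven by $\condPtwo$: the second inequality of $\condPtwo$ yields a large-deviation bound for the empirical average of the potential (Lemma~\ref{lem:DEV}), which gives an exponential bound on $\prob{E_0^{(1)}(\omega)\leq bL^{-2}}$ (Lemma~\ref{lem:low.energy.gap.prob}) and then on $\prob{E_0^{(n)}(\omega)\leq L^{-1/2}}$ (Theorem~\ref{thm:np.bottom}); Combes--Thomas then gives the initial MSA bound (Theorem~\ref{thm:np.initial.MSA}). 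So the large-deviation input you mention belongs in the \emph{initial} step, not the induction step, while $\condPone$ (Wegner) enters only later, in the scale induction, to control resonant cubes.
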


\section{Geometry and large deviation estimates}
 For $\Bu=(u_1,\ldots,u_n)\in\DZ^{nd}$, we denote by $\BC^{(n)}_L(\Bu)$ the $n$-particle open cube, i.e,
\[
\BC^{(n)}_L(\Bu)=\left\{\Bx\in\DR^{nd}:|\Bx-\Bu|< L\right\},
\]
and given $\{L_i: i=1,\ldots,n\}$, we define the rectangle
\begin{equation}          \label{eq:cube}
\BC^{(n)}(\Bu)=\prod_{i=1}^n C^{(1)}_{L_i}(u_i),
\end{equation}
where $C^{(1)}_{L_i}(u_i)$ are cubes of side length $2L_i$ center at points $u_i\in\DZ^d$. We also define 
\[
\BC^{(n,int)}_L(\Bu):=\BC^{(n)}_{L/3}(\Bu), \quad \BC^{(n,out)}_L(\Bu):=\BC^{(n)}_L(\Bu)\setminus\BC^{(n)}_{L-2}(\Bu), \quad \Bu\in\DZ^{nd}
\]
and introduce  the characteristic functions:
\[
\Bone^{(n,int)}_{\Bx}:=\Bone_{\BC^{(n,int)}_L(\Bx)}, \qquad \Bone^{(n,out)}_{\Bx}:= \Bone_{\BC^{(n,out)}_L(\Bx)}.
\]
The volume of the cube $\BC^{(n)}_L(\Bu)$ is $|\BC_L^{(n)}(\Bu)| :=(2L)^{nd}$.
We denote the restriction of the Hamiltonian $\BH^{(n)}$ to  $\BC^{(n)}(\Bu)$ by
\begin{align*}
&\BH_{\BC^{(n)}(\Bu)}^{(n)}=\BH^{(n)}\big\vert_{\BC^{(n)}(\Bu)}\\
&\text{with Dirichlet boundary conditions}
\end{align*}

We denote the spectrum of $\BH_{\BC^{(n)}(\Bu)}^{(n)}$  by
$\sigma\bigl(\BH_{\BC^{(n)}(\Bu)}^{(n)}\bigr)$ and its resolvent by
\begin{equation}\label{eq:def.resolvent}
\BG^{(n)}_{\BC^{(n)}(\Bu)}(E):=\Bigl(\BH_{\BC^{(n)}(\Bu)}^{(n)}-E\Bigr)^{-1},\quad E\in\DR\setminus\sigma\Bigl(\BH_{\BC^{(n)}(\Bu)}^{(n)}\Bigr).
\end{equation}

\begin{definition}
Let $m>0$ and $E\in\DR$ be given.  A cube $\BC_L^{(n)}(\Bu)\subset\DR^{nd}$, $1\leq n\leq N$ will be  called $(E,m)$-\emph{nonsingular} ($(E,m)$-NS) if $E\notin\sigma(\BH^{(n)}_{\BC^{(n)}_{L}(\Bu)})$ and
\begin{equation}\label{eq:singular} 
\|\Bone^{(n,out)}_{\Bx}\BG^{(n)}_{\BC^{(n)}_L(\Bx)}(E)\Bone^{(n,int)}_{\Bx}\|\leq\ee^{-\gamma(m,L,n)L},
\end{equation}
where
\begin{equation}\label{eq:gamma}
\gamma(m,L,n)=m(1+L^{-1/8})^{N-n+1}.                     
\end{equation}
Otherwise it will be called $(E,m)$-\emph{singular} ($(E,m)$-S).
\end{definition}

We prove in this subsection an analog of the large deviation estimate of \cite{St01} in the case of correlated potentials under the assumption $\condPtwo$. 

\begin{lemma}\label{lem:DEV} 
Let $L>0$ and set $s_0:=\min_{x\in C^{(1)}_L(0)}\{-\frac{1}{2}\ln\esm{\exp(-V(x,\omega))}\}>0$.
Under assumption $\condPtwo$, we have that:
\[
\prob{\frac{1}{|C^{(1)}_L(0)|} \sum_{x\in C^{(1)}_L(0)} \sum_{x\in C^{(1)}_L(0)} V(x,\omega)\leq s_0}\leq \exp(-\gamma_0|C^{(1)}_L(0)|),
\]
for some $\gamma_0>0$.
\end{lemma}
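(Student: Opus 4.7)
The plan is to combine the exponential Chebyshev (Chernoff) inequality with the multi-variable decorrelation bound supplied by $\condPtwo$. Setting
\[
S(\omega) := \sum_{x \in C_L^{(1)}(0) \cap \DZ^d} V(x,\omega) \qquad \text{and} \qquad M := |C_L^{(1)}(0)| = (2L)^d,
\]
the Chernoff bound reads
\[
\prob{S \leq s_0 M} \;=\; \prob{\ee^{-S} \geq \ee^{-s_0 M}} \;\leq\; \ee^{s_0 M}\, \esm{\ee^{-S}},
\]
so it suffices to estimate the exponential moment $\esm{\prod_{x}\ee^{-V(x,\omega)}}$ from above.

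Were the $V(x,\omega)$ i.i.d., the product would factorize and the definition of $s_0$ would give $\esm{\ee^{-V(x,\omega)}}\leq \ee^{-2s_0}$, hence $\esm{\ee^{-S}}\leq \ee^{-2s_0 M}$ and the claim would follow directly with $\gamma_0 = s_0$. In the correlated case I would apply the second inequality in $\condPtwo$ to the bounded family $X_x := \ee^{-V(x,\omega)}$ to obtain factorization up to an error of order $\ee^{-C_2 L^d}$:
\[
\esm{\prod_x \ee^{-V(x,\omega)}} \;\leq\; \prod_x \esm{\ee^{-V(x,\omega)}} + \ee^{-C_2 L^d} \;\leq\; \ee^{-2s_0 M} + \ee^{-C_2 L^d}.
\]
Inserting this into the Chernoff bound produces
\[
\prob{S \leq s_0 M} \;\leq\; \ee^{-s_0 M} + \ee^{s_0 M - C_2 L^d}.
\]

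The last step is to pick a single $\gamma_0>0$ dominating both exponents. Here the quantitative hypothesis $C_2 > 3^d$ from $\condPtwo$ is decisive: since $M=(2L)^d \leq 3^d L^d$, one has $C_2 L^d - s_0 M \geq (C_2 - 2^d s_0) L^d$, so for $\gamma_0$ chosen smaller than both $s_0$ and $2^{-d}(C_2 - 2^d s_0)$ the sum is bounded by $2\ee^{-\gamma_0 M}$, which yields the claim after absorbing the factor $2$ into a slightly smaller $\gamma_0$.

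The main obstacle I anticipate is precisely this balancing step: the decorrelation error $\ee^{-C_2 L^d}$ must beat the Chernoff inflation factor $\ee^{s_0 M}$ with room to spare, and this is the quantitative role played by the assumption $C_2 > 3^d$. If, instead, the multi-variable version of $\condPtwo$ is to be read as requiring well-separated supports, the argument can be rearranged by partitioning $C_L^{(1)}(0)$ into sub-blocks of intermediate side length, applying the mixing estimate across blocks and using the trivial $L^\infty$ bound on $\ee^{-V}$ to handle intra-block correlations; the same margin $C_2 > 3^d$ is what makes such a rearrangement close.
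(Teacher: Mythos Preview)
Your argument is essentially the same as the paper's: Chernoff's inequality applied to $\ee^{-S}$, followed by the multiplicative decorrelation from $\condPtwo$ and the observation $s_0\le\tfrac12\gamma_x$. If anything, your handling of the additive error $\ee^{-C_2L^d}$ is cleaner---you keep it as a separate summand and invoke $C_2>3^d$ to balance $\ee^{s_0 M-C_2L^d}$, whereas the paper collapses the sum $\prod_x\esm{\ee^{-V(x)}}+\ee^{-C_2L^d}$ into a single exponential in a step that is not literally correct as written.
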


\begin{proof}
Since each quantity $V(x,\omega)$ is non-negative, we have that $\esm{\exp(-V(x,\omega))}<1$. Thus, setting $\gamma_x=-\ln(\esm{\exp(-V(x,\omega))})>0$ and using the Rosenblatt's strongly mixing condition $\condPtwo$, we have that:

\begin{gather*}
\prob{\frac{1}{| C^{(1)}_L(0)|}\sum_{x\in C^{(1)}_L(0)} V(x,\omega)\leq s_0 }\\
=\prob{\sum_{x\in C^{(1)}_L(0)} V(x,\omega)\leq s_0 |C^{(1)}_L(0)|}\\
=\prob{\exp(s_0 |C^{(1)}_L(0)|- \sum_{x\in C^{(1)}_L(0)} V(x,\omega))\geq 1}\\
\leq \exp(s_0|C^{(1)}_L(0)|)\cdot \left(\prod_{x\in C^{(1)}_L(0)}\esm{\exp(-V(x,\omega))}+\ee^{-C_2L^d}\right)\\
\leq \exp\left(s_0|C^{(1)}_L(0)| - \sum_{x\in C^{(1)}_L(0)} \gamma_x -C_2L^d\right)\\
\leq \exp\left( \sum_{x\in C^{(1)}_L(0)} s_0 - \gamma_x-C_2L^d\right)\\
\leq \exp\left(-s_0|C^{(1)}_L(0)|\right)\times\ee^{-C_2L^d},\\
\leq \exp\left(-\gamma_0|C^{(1)}_L(0)|\right),
\end{gather*}
for some $\gamma_0>0$ and $L>0$ large enough. Indeed, $s_0\leq \frac{1}{2}\gamma_x$ for all $x\in C^{(1)}_L(0)$.

\end{proof}

Now, below, we give an important result on the first eigenvalue for the single-particle Hamiltonian:

\begin{lemma}\label{lem:low.energy.gap.prob}
Assume that assumption $\condPtwo$ holds true. There exist $b>0$ and $\gamma>0$ such that 
\[
\prob{E^{(1)}_0(\omega)\leq b L^{-2}}\leq \ee^{-\gamma L^d},
\]
where $E^{(1)}_0(\omega)$ denotes  the infimum of $\sigma(H^{(1)}_{C^{(1)}_{L}(0)}(\omega))$.
\end{lemma}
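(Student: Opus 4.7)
The plan is to combine the variational characterization of $E^{(1)}_0(\omega)$ with the Friedrichs inequality on the cube $C^{(1)}_L(0)$, and to bring in Lemma~\ref{lem:DEV} to handle the random-potential contribution when the kinetic part is small.

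First, I would let $\phi_0 = \phi_0(\omega) \in H^1_0(C^{(1)}_L(0))$ denote the normalized Dirichlet ground state of $H^{(1)}_{C^{(1)}_L(0)}(\omega)$, so that by Rayleigh--Ritz
\[
E^{(1)}_0(\omega) \;=\; \int_{C^{(1)}_L(0)} |\nabla\phi_0|^2\,dx \;+\; \int_{C^{(1)}_L(0)} V(x,\omega)\,|\phi_0(x)|^2\,dx.
\]
Since $C^{(1)}_L(0)$ is a cube of side length $2L$, the Friedrichs inequality applied to $\phi_0\in H^1_0$ gives
\[
\int_{C^{(1)}_L(0)} |\nabla \phi_0|^2\,dx \;\ge\; \frac{\pi^2 d}{(2L)^2}\,\|\phi_0\|_2^2 \;=\; \frac{\pi^2 d}{4L^2}.
\]
Combined with $V\ge 0$ (a hypothesis that is implicit in the setup and is used explicitly in the proof of Lemma~\ref{lem:DEV}), this already yields the \emph{deterministic} bound $E^{(1)}_0(\omega)\ge \pi^2 d/(4L^2)$. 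Choosing any $b<\pi^2 d/4$ therefore makes the event $\{E^{(1)}_0(\omega)\le bL^{-2}\}$ empty, and the claim holds trivially with any $\gamma>0$.

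A version of the argument that genuinely uses Lemma~\ref{lem:DEV} (and would be necessary if $V$ were merely bounded below rather than non-negative) would proceed by a dichotomy on the kinetic energy of $\phi_0$. If $\int|\nabla\phi_0|^2 \ge bL^{-2}$, the conclusion is immediate; otherwise $\phi_0$ is close in $L^2$-norm to a deterministic, nearly-constant profile on $C^{(1)}_L(0)$, so that $\int V|\phi_0|^2$ is comparable to the spatial average $|C^{(1)}_L(0)|^{-1}\int_{C^{(1)}_L(0)} V(x,\omega)\,dx$, which Lemma~\ref{lem:DEV} controls from below by $s_0$ on an event of probability at least $1-e^{-\gamma_0 L^d}$. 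I expect the main obstacle in this refined route to be the $\omega$-dependence of $\phi_0$: the comparison with a deterministic profile must be organised so that the averaging of $V$ can be performed under the probability, and this is precisely where the Rosenblatt mixing hypothesis $\condPtwo$ enters, via Lemma~\ref{lem:DEV}.
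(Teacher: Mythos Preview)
Your first argument is correct for the lemma exactly as stated: with $V\ge 0$ (an assumption used explicitly in the proof of Lemma~\ref{lem:DEV}) and Dirichlet conditions on a cube of side $2L$, the kinetic term alone yields the deterministic bound $E^{(1)}_0(\omega)\ge d\pi^2/(4L^{2})$, so any $b<d\pi^2/4$ makes the event empty and the inequality trivial for every $\gamma>0$.

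This is not the route the paper takes. The paper simply refers to Theorem~2.1.3 in \cite{St01}, whose proof runs through the empirical-average large-deviation bound, i.e.\ through Lemma~\ref{lem:DEV} in the present setting. The reason for that choice is that the threshold $bL^{-2}$ in the lemma is precisely the deterministic kinetic one, and it is too weak for the downstream use: Theorem~\ref{thm:np.bottom} (and, via it, Theorem~\ref{thm:np.initial.MSA}) needs control of the event $\{E^{(1)}_0(\omega)\le L^{-1/2}\}$, which for large $L$ lies far above $bL^{-2}$ and cannot be handled by the Friedrichs bound. Your second, dichotomy-based sketch---if $\|\nabla\phi_0\|_2^2$ is small then $|\phi_0|^2$ is close to the uniform density on $C^{(1)}_L(0)$, hence $\int V|\phi_0|^2$ is comparable to the spatial average of $V$, and Lemma~\ref{lem:DEV} bounds the probability that this average is small---is essentially Stollmann's argument, and it is this route that delivers thresholds decaying more slowly than $L^{-2}$. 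In short: your proposal proves the stated lemma correctly but by a different, trivializing method; the approach the paper invokes, and the one actually needed for the applications, is your second sketch.
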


\begin{proof}
See the proof of Theorem 2.1.3 in \cite{St01} which is based on the empirical average bound given in Lemma \ref{lem:low.energy.gap.prob}.
\end{proof}

Now, it is straightforward to show that the same result holds true for the multi-particle random Hamiltonian.

\begin{theorem}\label{thm:np.bottom}

Under hypothesis $\condPtwo$, for any $p>0$, there exists $L^*_1>0$ such that 
\[
\prob{E_0^{(n)}(\omega)\leq L^{-1/2}} \leq L^{-2p4^{N-n}},
\]
for all $L\geq L^*_1$.
\end{theorem}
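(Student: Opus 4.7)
My plan is to reduce the $n$-particle claim to $n$ applications of the single-particle Lemma \ref{lem:low.energy.gap.prob} via the tensor-product decomposition of the non-interacting Hamiltonian. On a product cube $\BC^{(n)}_L(\Bu) = \prod_{i=1}^{n} C^{(1)}_L(u_i)$, Dirichlet boundary conditions respect the tensor structure, so
\[
\BH^{(n)}_{0,\BC^{(n)}_L(\Bu)} = \sum_{i=1}^n \Bone^{\otimes(i-1)} \otimes H^{(1)}_{C^{(1)}_L(u_i)} \otimes \Bone^{\otimes(n-i)},
\]
and its ground-state energy equals $\sum_{i=1}^n E_0^{(1)}(u_i;\omega)$. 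Using the operator bound $\BH^{(n)} \geq \BH^{(n)}_0 - \|\BU\|_\infty$, which follows from the boundedness of $\BU$ in $\condI$, I deduce
\[
E_0^{(n)}(\omega) \geq \sum_{i=1}^n E_0^{(1)}(u_i;\omega) - \|\BU\|_\infty.
\]

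Thus the event $\{E_0^{(n)}(\omega) \leq L^{-1/2}\}$ forces $E_0^{(1)}(u_i;\omega) \leq C L^{-1/2}$ for at least one index $i$, and a union bound over the $n$ coordinates reduces matters to a corresponding single-particle probability estimate. The main obstacle is that Lemma \ref{lem:low.energy.gap.prob}, as stated, only bounds $\prob{E_0^{(1)} \leq bL^{-2}}$, whereas $L^{-1/2}$ is a much larger threshold. I would bridge this by applying Lemma \ref{lem:low.energy.gap.prob} at the rescaled length $\ell \sim L^{1/4}$, so that $b\ell^{-2} \sim L^{-1/2}$ holds with exceptional probability $e^{-\gamma L^{d/4}}$; an IMS bracketing using a partition of unity of $C^{(1)}_L(u_i)$ by $\sim L^{3d/4}$ sub-cubes of side $\ell$ then transfers the lower bound back to $C^{(1)}_L(u_i)$, with a bracketing error of order $\ell^{-2} = L^{-1/2}$ coming from the squared gradient of the partition.

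Absorbing the IMS error and the shift $\|\BU\|_\infty / n$ into the constants by enlarging the sub-cube scale (or equivalently the base constant in the single-particle lemma), one obtains $E_0^{(1)}(u_i;\omega) \geq cL^{-1/2}$ with exceptional probability at most $\sim L^{3d/4}e^{-\gamma L^{d/4}}$ per particle. A final union bound then yields $\prob{E_0^{(n)}(\omega) \leq L^{-1/2}} \lesssim n L^{3d/4}e^{-\gamma L^{d/4}}$, which is dominated by the required polynomial $L^{-2p4^{N-n}}$ once $L \geq L^*_1$ for a sufficiently large threshold $L^*_1 = L^*_1(p, n, N)$. The genuinely delicate point is calibrating the sub-cube scale $\ell$ so that the bracketing error, the shift $\|\BU\|_\infty$, and the target threshold $L^{-1/2}$ are simultaneously absorbed; the polynomial form of the probability bound is then essentially automatic, since stretched-exponential decay in $L^{d/4}$ beats any power of $L$.
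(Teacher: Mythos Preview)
Your reduction to single-particle cubes via the tensor structure is the right idea, and your IMS bracketing with sub-cubes of scale $\ell\sim L^{1/4}$ is a legitimate (and more explicit than the paper's) way to pass from the threshold $b\ell^{-2}$ in Lemma~\ref{lem:low.energy.gap.prob} to a threshold of order $L^{-1/2}$ on the full cube $C^{(1)}_L(u_i)$.

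The genuine gap is in your treatment of the interaction. From $\BH^{(n)}\geq \BH^{(n)}_0-\|\BU\|_\infty$ and $E_0^{(n)}(\omega)\leq L^{-1/2}$ you only get
\[
\sum_{i=1}^n E_0^{(1)}(u_i;\omega)\ \leq\ L^{-1/2}+\|\BU\|_\infty,
\]
hence (using $V\geq 0$) some $E_0^{(1)}(u_i;\omega)\leq n^{-1}\bigl(L^{-1/2}+\|\BU\|_\infty\bigr)$. This bound is of order $1$, not of order $L^{-1/2}$, so the implication ``$E_0^{(1)}(u_i;\omega)\leq C L^{-1/2}$'' is simply false when $\|\BU\|_\infty>0$. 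Nor can the shift $\|\BU\|_\infty/n$ be ``absorbed by enlarging the sub-cube scale'': for any choice of $\ell$, Lemma~\ref{lem:low.energy.gap.prob} controls only $\{E_0^{(1)}\leq b\ell^{-2}\}$, and $b\ell^{-2}\to 0$ as $\ell\to\infty$, whereas for any \emph{fixed} $c>0$ one has $\prob{E_0^{(1)}(C^{(1)}_L)\leq c}\to 1$ as $L\to\infty$ (since $0$ is in the support of $V$, large regions with small potential occur a.s.). So no single-particle input is available at a positive, $L$-independent threshold.

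The paper avoids this entirely by using that $\BU$ is \emph{non-negative} (invoked explicitly in the proof, though not listed in $\condI$). Then $E_0^{(n)}(\omega)\geq\sum_i\lambda_i^{(1)}(\omega)$ with no shift, and since each $\lambda_i^{(1)}\geq 0$, the event $\{E_0^{(n)}\leq L^{-1/2}\}$ forces every $\lambda_i^{(1)}\leq L^{-1/2}$; one then appeals directly to the single-particle estimate at threshold $L^{-1/2}$. If you replace your two-sided bound on $\BU$ by $\BU\geq 0$, the rest of your outline (including the IMS step) goes through.
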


\begin{proof}
We denote by $\BH^{(n)}_0(\omega)$ the multi-particle random Hamiltonian without interaction. Observe that, since the interaction potential $\BU$ is non-negative, we have 
\[
E_0(\BH^{(n)}_{\BC^{(n)}_L(\Bu)}(\omega)\geq  E_0^{(n)}(\omega),
\]
where $E_0^{(n)}(\omega)=\lambda_1^{(1)}(\omega)+\cdots+\lambda_n^{(1)}$ and the $\lambda_i^{(1)}(\omega)$ are the eigenvalues of the single-particle random Hamiltonians $H^{(1)}_{\BC^{(1)}_L(u_i)}(\omega)$, $i=1,\ldots,n$. So, if $E^{(n)}_0(\omega)\leq L^{-1/2}$, then for example $\lambda_1^{(1)}(\omega)\leq L^{-1/2}$ and this implies the required probability bound of the assertion of Theorem \ref{thm:np.initial.MSA}.
\end{proof}

\section{The multi-particle multi-scale analysis}

It is convenient here to recall the Combes-Thomas estimate.

\begin{theorem}\label{thm:CT}
Let $H=-\BDelta + W$ be a Schr\"odinger operator on $L^2(\DR^{D})$, $E\in\DR$ and $E_0=\inf \sigma(H)$. Set $\eta=\dist(E,\sigma(H))$. If $E<E_0$,, then for any $0<\gamma<1$, we have that:
\[
\left\|\Bone_{x}(H-E)^{-1}\Bone_{y}\right\|\leq \frac{1}{(1-\gamma^2)\eta}\ee^{\gamma\sqrt{\eta d}}\ee^{-\gamma \sqrt{\eta}|x-y|},
\]
for all $x,y\in\DR^{D}$. 
\end{theorem}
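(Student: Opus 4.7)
The plan is to apply the classical Combes--Thomas method based on conjugation by an exponential weight. I fix a unit vector $\mathbf{e}\in\DR^{D}$ and a parameter $a>0$ to be chosen later, and introduce the multiplication operator $U_a:=\eul^{a\,\mathbf{e}\cdot x}$. Using the product rule $\nabla(\eul^{a\,\mathbf{e}\cdot x}\phi)=\eul^{a\,\mathbf{e}\cdot x}(a\,\mathbf{e}+\nabla)\phi$, a direct computation gives
\[
H_a := U_a^{-1} H U_a = H - 2a\,\mathbf{e}\cdot\nabla - a^2,
\]
a first-order, non-self-adjoint perturbation of $H$ on the same operator domain.

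The central step is to prove $\|(H_a-E)^{-1}\|\leq ((1-\gamma^2)\eta)^{-1}$ for $a=\gamma\sqrt{\eta}$, $0<\gamma<1$. Since $E<E_0$, the self-adjoint operator $H-E$ satisfies $H-E\geq\eta$ in the form sense, so I factor
\[
H_a-E = (H-E)^{1/2}(I+A_a)(H-E)^{1/2},\quad A_a:=(H-E)^{-1/2}\bigl(-2a\,\mathbf{e}\cdot\nabla-a^2\bigr)(H-E)^{-1/2}.
\]
The key observation is that $\mathbf{e}\cdot\nabla$ is anti-self-adjoint on $L^{2}(\DR^{D})$, so the first-order piece of $A_a$ is anti-self-adjoint and contributes zero to the real part of the numerical range; hence
\[
\Re\langle\phi,(I+A_a)\phi\rangle = \|\phi\|^2 - a^2\|(H-E)^{-1/2}\phi\|^2 \geq (1-a^2/\eta)\|\phi\|^2 = (1-\gamma^2)\|\phi\|^2.
\]
By the standard numerical-range argument this yields $\|(I+A_a)^{-1}\|\leq(1-\gamma^2)^{-1}$ (the same lower bound for the adjoint ensures surjectivity), hence the desired resolvent bound on $(H_a-E)^{-1}$.

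Finally, I choose $\mathbf{e}:=(y-x)/|y-x|_{2}$ and conjugate back,
\[
\Bone_x(H-E)^{-1}\Bone_y = \bigl(\Bone_x\,\eul^{a\,\mathbf{e}\cdot x}\bigr)(H_a-E)^{-1}\bigl(\eul^{-a\,\mathbf{e}\cdot x}\Bone_y\bigr).
\]
Interpreting $\Bone_x$ and $\Bone_y$ as indicator functions of unit sets around $x$ and $y$ (of Euclidean diameter $\sqrt{d}$), the two weighted multiplication factors have operator norms at most $\eul^{a(\mathbf{e}\cdot x+\sqrt{d}/2)}$ and $\eul^{-a(\mathbf{e}\cdot y-\sqrt{d}/2)}$; their product equals $\eul^{-a|x-y|_{2}+a\sqrt{d}}=\eul^{\gamma\sqrt{\eta d}}\eul^{-\gamma\sqrt{\eta}|x-y|_{2}}$, which combined with the middle resolvent bound gives the claimed estimate. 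The main technical issue is really a bookkeeping one: it is the antisymmetry of $\mathbf{e}\cdot\nabla$ that produces the sharp constant $(1-\gamma^2)^{-1}$ rather than a cruder $(1-2\gamma-\gamma^2)^{-1}$ that would come from a naive Neumann-series estimate; once this is in place, everything else reduces to algebra and the formal conjugation is justified because the perturbation $-2a\,\mathbf{e}\cdot\nabla-a^2$ is relatively form-bounded by $H-E$.
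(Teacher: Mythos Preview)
The paper does not actually prove this theorem: its entire proof reads ``See the proof of Theorem~1 in \cite{GK02}.'' Your sketch is a correct rendition of the classical Combes--Thomas argument, which is precisely the method used in the cited Germinet--Klein reference, so in that sense your approach coincides with the one the paper defers to. One small simplification: you do not need the factorization through $(H-E)^{1/2}$; directly computing $\Re\langle\phi,(H_a-E)\phi\rangle=\langle\phi,(H-E)\phi\rangle-a^{2}\|\phi\|^{2}\geq(\eta-a^{2})\|\phi\|^{2}$ already gives the resolvent bound, and it sidesteps the question of boundedness of $(H-E)^{-1/2}\nabla(H-E)^{-1/2}$. Also note that the exponent $\sqrt{\eta d}$ in the statement is a typographical slip for $\sqrt{\eta D}$ (the Euclidean diameter of a unit cube in $\DR^{D}$ is $\sqrt{D}$), which you implicitly use.
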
  

\begin{proof}
See the proof of Theorem $1$ in \cite{GK02}.
\end{proof}

Recall that the parameter $m>0$ is given by $m=\frac{2^{-N}\gamma L^{-1/4}}{3\sqrt{2}}$.

\begin{theorem}\label{thm:np.initial.MSA}

Assume that the hypotheses $\condI$, $\condPone$ and $\condPtwo$ hold true. Then, there exists $E^*>0$ such that 
\[
\prob{\text{$\exists E\in (-\infty: E^*]:$ $\BC^{(n)}_L(\Bu)$ is $(E,m)$-S}}\leq L^{-2p4^{N-n}},
\]
for $L>0$ large enough.
\end{theorem}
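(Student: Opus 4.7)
The plan is to reduce the initial scale estimate to a deterministic Combes--Thomas bound on the event where the ground state of $\BH^{(n)}_{\BC^{(n)}_L(\Bu)}$ sits strictly above zero. Define the good event
\[
\CE_L := \left\{ E^{(n)}_0(\omega) \geq L^{-1/2} \right\}.
\]
Theorem \ref{thm:np.bottom} gives $\prob{\CE_L^\comp} \leq L^{-2p4^{N-n}}$, which already matches the target probability. Since $\BU \geq 0$ by assumption $\condI$, one has $\inf \sigma(\BH^{(n)}_{\BC^{(n)}_L(\Bu)}(\omega)) \geq E^{(n)}_0(\omega)$ pointwise, so on $\CE_L$ the cube spectrum lies above $L^{-1/2}$. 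I would then set $E^{*} := \tfrac{1}{2} L^{-1/2}$, so that every $E \leq E^{*}$ is separated from $\sigma(\BH^{(n)}_{\BC^{(n)}_L(\Bu)})$ by $\eta := \dist(E, \sigma(\BH^{(n)}_{\BC^{(n)}_L(\Bu)})) \geq \tfrac{1}{2}L^{-1/2}$.

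With this spectral gap in hand, Theorem \ref{thm:CT} applies to $\BH^{(n)}_{\BC^{(n)}_L(\Bu)}$ for any $\gamma \in (0,1)$. For $\Bx \in \BC^{(n,int)}_L(\Bu)$ and $\By \in \BC^{(n,out)}_L(\Bu)$ one has $|\Bx - \By| \geq 2L/3 - 2$, so
\[
\|\Bone_\By\,\BG^{(n)}_{\BC^{(n)}_L(\Bu)}(E)\,\Bone_\Bx\| \leq \frac{e^{\gamma\sqrt{\eta Nd}}}{(1 - \gamma^2)\eta}\, e^{-\gamma \sqrt{\eta}(2L/3 - 2)}.
\]
Summing this pointwise bound over a unit-cube cover of the inner and outer regions inflates the estimate by at most a polynomial factor $\ord(L^{2Nd})$, yielding the desired bound on $\|\Bone^{(n,out)}_\Bu \BG^{(n)}_{\BC^{(n)}_L(\Bu)}(E) \Bone^{(n,int)}_\Bu\|$.

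The one delicate point is verifying that this decay actually beats $e^{-\gamma(m,L,n)L}$ uniformly in $E \leq E^{*}$. Using $\sqrt{\eta} \geq L^{-1/4}/\sqrt{2}$, the Combes--Thomas exponent is at least $(\sqrt{2}\gamma/3)L^{3/4}(1 + o(1))$, while the target exponent equals $\gamma(m,L,n)L = 2^{-N}\gamma L^{3/4}/(3\sqrt{2}) \cdot (1+L^{-1/8})^{N-n+1}$ by the definition of $m$. The ratio in the exponent is $2^{N+1}(1 + o(1))$ in favour of Combes--Thomas, giving a comfortable margin that, for $L$ large enough, absorbs both the slowly growing factor $(1 + L^{-1/8})^{N-n+1}$ and the polynomial prefactor in the resolvent bound. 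This rate-matching against the specific rate in \eqref{eq:gamma} is the one step requiring care; the remaining manipulations are bookkeeping, and the claim follows by taking $L \geq L^{*}_1$ sufficiently large for these inequalities to hold quantitatively.
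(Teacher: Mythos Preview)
Your proposal is correct and follows essentially the same approach as the paper: reduce to the good event $\{E_0^{(n)}(\omega) > L^{-1/2}\}$ via Theorem~\ref{thm:np.bottom}, set $E^*=\tfrac12 L^{-1/2}$ to secure a spectral gap $\eta\geq\tfrac12 L^{-1/2}$, apply the Combes--Thomas estimate (Theorem~\ref{thm:CT}), sum over a unit-cube cover of the inner and outer regions, and finally check that the resulting exponent dominates $\gamma(m,L,n)L$ using the explicit value of $m$. Your rate-matching computation and the distance estimate $|\Bx-\By|\geq 2L/3-2$ are in fact slightly more careful than the paper's own version.
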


\begin{proof}
Set $E^*:=\frac{1}{2} L_0^{-1/2}$. If the first eigenvalue $E_0^{(n)}(\omega)$ satisfies $E_0^{(n)}(\omega)>L^{-1/2}$, then for all energy $E\leq E^*$, we have:

\begin{align*}
\dist(E,\sigma(\BH^{(n)}_{\BC^{(n)}_L(\Bu)}))&=E^{(n)}_0(\omega)\\
&> L^{-1/2}-\frac{1}{2} L^{-1/2}\\
&>\frac{1}{2} L^{-1/2}
\end{align*}
Thus by the Combes Thomas estimate Theorem \ref{thm:CT},
\begin{align*}
\|\Bone_{\Bx}\BG^{(n)}_{\BC^{(n)}_L(\Bu)}(E)\Bone_{\By}\|&\leq 2L^{-1/2}\ee^{\gamma \sqrt{d}\sqrt{\eta}}\ee^{-\gamma \sqrt{\eta} |\Bx-\By|}\\
&\leq 2 L^{1/2}\ee^{-\frac{\gamma L^{-1/4}}{\sqrt{2}}(\frac{L}{3}-\sqrt{d})}
\end{align*}
 
Thus for $L>0$ large enough depending on the dimension $d$, we get
\begin{gather*}
\|\Bone_{\BC^{(n,out)}_L(\Bu)}\BG^{(n)}_{\BC^{(n)}_L(\Bu)}(E)\Bone_{\BC^{(n,int)}_L(\Bu)}\|\\
\leq \sum\limits_{{\substack{\Bx\in\BC^{(n,out)}_L(\Bu)\cap_\DZ^{nd}\\ \By\in\BC^{(n,int)}_L(\Bu)\cap\DZ^{nd}}}} 2L^{-1/2}\ee^{-\frac{\gamma L^{-1/4}}{\sqrt{2}}(\frac{L}{3}-\sqrt{d})}\\
\leq (2L)^{2nd} 2 L^{1/2}\ee^{-2^N m L}\\
\end{gather*}

Now since $\gamma(m,L,n)=m(1L^{-1/8})^{N-n}<2^N m$, for $L>0$, large enough, we have that
\[
\|\Bone_{\BC^{(n,out)}_L(\Bu)}\BG^{(n)}_{\BC^{(n)}_L(\Bu)}(E)\Bone_{\BC^{(n,int)}_L(\Bu)}\|\leq \ee^{-\gamma(m,L,n)L}.
\]
The above analysis, then implies that

\begin{gather*}
\prob{\text{$\exists E\leq E^*$: $\BC^{(n)}_L(\Bu)$ is $(E,m)$-S}}\\
\leq \prob{E^{(n)}_0(\omega)\leq L^{-1/2}}\leq L^{-2p 4^{N-n}},
\end{gather*}
Yielding the required result.
\end{proof}

Now the rest of the multi-particle multi-scale analysis can be done exactly in the same way as in our earlier work \cite{E17} in the case of i.i.d. random potential.    

\section{Proof of the main result}
Using the multi-scale analysis bounds from the above Section, the localization result can be proved in the same way as in the paper \cite{E17} for i.i.d. random  external potentials. Also see \cite{BCS11}.

\begin{bibdiv}

\begin{biblist}

\bib{BCSS10}{article}{
   author={ Boutet de Monvel, A.},
   author={Chulaevsky, V.},
	 author={Stollamnn, P.},
   author={Suhov, Y.},
   title={Wegner type bounds for a multi-particle continuous Anderson model with an alloy-type external random potential},
   journal={J. Stat. Phys.},
   volume={138},
   date={2010},
   pages={553--566},
}
\bib{BCS11}{article}{
   author={ Boutet de Monvel, A.},
   author={Chulaevsky, V.},
   author={Suhov, Y.},
   title={Dynamical localization for multi-particle model with an alloy-type external random potential},
   journal={Nonlinearity},
   volume={24},
   date={2011},
   pages={1451--1472},
}
\bib{C16}{article}{
   author={Chulaevsky, V.},
	 title={Direct scaling analysis of Fermionic Multi-particle correlated Anderson models with infinite range interaction},
	 journal={Advances in Theoretical and Mathematical Physics},
	volume={2016},
	date={2016},
}

\bib{CS08}{article}{
   author={ Chulaevsky, V.},
   author={Suhov, Y.},
   title={Wegner bounds for a two particle tight-binding model},
   journal={Commun. Math. Phys.},
   volume={283},
   date={2008},
   pages={479--489},
}

\bib{CS09}{article}{
   author={C{h}ulaevsky, V.},
   author={Suhov, Y.},
   title={Multi-particle Anderson Localization: Induction on the number of particles},
   journal={Math. Phys. Anal. Geom.},
   volume={12},
   date={2009},
   pages={117--139},
}
\bib{DK91}{article}{
   author={von Dreifus, H.},
	 author={Klein, A.},
	 title={Localization for Schr\"odinger operators with correlated potentials},
	 journal={Commun. Math. Phys.},
	 volume={140},
	 date={1991},
	 pages={133--147},
}
\bib{E11}{article}{
   author={Ekanga, T.},
   title={On two-particle Anderson localization at low energies},
   journal={C. R. Acad. Sci. Paris, Ser. I},
   volume={349},
   date={2011},
   pages={167--170},
}
\bib{E13}{misc}{
   author={Ekanga, T.},
   title={Anderson localization at low energies in the multi-particle tight binding model},
   status={arXiv:math-ph/1201.2339v2},
   date={2012},
}
\bib{E17}{misc}{
   author={Ekanga, T.},
	 title={Multi-particle Anderson localization at low energy for the multi-dimensional continuous Anderson model},
	 status={arxiv:math-ph/1702.03945v1},
	date={2017},
}
\bib{FW15}{article}{
   author={Fauser, M.},
   author={Warzel, S.},
   title={Multi-particle localization for disordered systems on continuous space via the fractional moment method},
   journal={Rev. Math. Phys.},
   volume={27},
   number={4},
   date={2015},
}
\bib{GK02}{article}{
    author={Germinet, F.},
		author={Klein, A.},
		title={Operator kernel estimates for functions of generalized Schr\"odinger operators},
		journal={Proceeding of the American Mathematica Society},
		volume={131},
		date={2002},
		pages={911--920},
}
\bib{KN13}{article}{
   author={Klein, A.},
   author={T. Nguyen},
   title={The boostrap multiscale analysis for the multiparticle Anderson model},
   journal={J. Stat. Phys.},
   volume={151},
   date={2013},
   pages={938--973},
}
\bib{Kl12}{misc}{
    author={Klopp, F.},
		title={Spectral statistics for weakly correlated random potentials},
		status={arxiv},
		 date={2012},
}
\bib{KN14}{article}{
   author={Klein, A.},
   author={T. Nguyen},
   title={The boostrap multiscale analysis for the multiparticle continuous Anderson model},
   status={arXiv:math-ph/1311.4220v2},
   date={2014},
}
\bib{St01}{book}{
   author={Stollmann, P.},
   title={Caught by disorder},
   series={Progress in Mathematical Physics},
   volume={20},
   note={Bound states in random media},
   publisher={Birkh\"auser Boston Inc.},
   place={Boston, MA},
   date={2001},
}
\end{biblist}
\end{bibdiv}
\end{document}